\def\BibTeX{{\rm B\kern-.05em{\sc i\kern-.025em b}\kern-.08em
    T\kern-.1667em\lower.7ex\hbox{E}\kern-.125emX}}
\newcommand{\R}{\mathbb{R}}
\newcommand{\Var}{{\rm Var}}
\newcommand{\Vartildehat}{\hat{\tilde{{\rm Var}}}}
\newcommand{\Covtildehat}{\hat{\tilde{{\rm Cov}}}}
\newcommand{\diag}{{\rm diag}}
\newcommand{\cind}{\overset{\text{c}}{\ind}}
\newcommand{\epsind}{\overset{\varepsilon}{\ind}}
\newcommand{\ind}{\perp \!\!\! \perp}
\newcommand{\Xb}{\bm{X}}
\newcommand{\xb}{\bm{x}}
\newcommand{\vbtilde}{\tilde{\bm{v}}}
\newcommand{\vbtildehat}{\hat{\tilde{\bm{v}}}}
\newcommand{\vtildehat}{\hat{\tilde{v}}}
\newcommand{\vtilde}{\tilde{v}}
\newcommand{\ubtildehat}{\hat{\tilde{\bm{u}}}}
\newcommand{\Ubtildehat}{\hat{\tilde{\bm{U}}}}
\newcommand{\Omegabtildehat}{\hat{\tilde{\bm{\Omega}}}}
\newcommand{\omegatildehat}{\hat{\tilde{\omega}}}
\newcommand{\Vbtildehat}{\hat{\tilde{\bm{V}}}}
\newcommand{\xib}{\bm{\xi}}
\newcommand{\Sigmab}{\bm{\Sigma}}
\newcommand{\Sigmabtilde}{\tilde{\bm{\Sigma}}}
\newcommand{\Sigmabtildehat}{\hat{\tilde{\bm{\Sigma}}}}
\newcommand{\sigmatilde}{\tilde{\sigma}}
\newcommand{\sigmatildehat}{\hat{\tilde{\sigma}}}
\newcommand{\lambdatilde}{\tilde{\lambda}}
\newcommand{\lambdatildehat}{\hat{\tilde{\lambda}}}
\newcommand{\Lambdabtildehat}{\hat{\tilde{\bm{\Lambda}}}}
\newcommand{\jN}{{j|N}}
\newcommand{\Beta}{\mathrm{B}}
\newcommand{\Etab}{\bm{\mathrm{H}}}
\newcommand{\etab}{\bm{\eta}}
\newcommand{\Epsilonb}{\bm{\mathrm{E}}}
\newcommand{\epsilonb}{\bm{\varepsilon}}
\newcommand{\Rho}{\mathrm{P}}
\newcommand{\Rhob}{\bm{\Rho}}
\newcommand{\Rhobtilde}{\tilde{\bm{\Rho}}}
\newcommand{\Rhobtildehat}{\hat{\tilde{\bm{\Rho}}}}
\newcommand{\bD}{\mathcal{D}}
\newcommand{\bDelta}{\mathit{\Delta}}
\newtheorem{Definition}{Definition}[section]
\newtheorem{Theorem}{Theorem}[section]
\newtheorem{Corollary}{Corollary}[section]
\newtheorem{Algorithm}{Algorithm}[section]
\newcommand{\refBauer}{Theorem 4.2 in \cite{BD20}}
\newcommand{\refDavis}{Corollary 1 in \cite{YW15}}
\numberwithin{equation}{section} 
\begin{document}

\title{Correlation Based Principal Loading Analysis}

\author{\IEEEauthorblockN{Jan O. Bauer}\\
\IEEEauthorblockA{\textit{Baden-Wuerttemberg Cooperative State University Mannheim} \\
Mannheim, Germany \\
jan.bauer@dhbw-mannheim.de}
}
\maketitle
\begin{abstract}
Principal loading analysis is a dimension reduction method that discards variables which have only a small distorting effect on the covariance matrix. We complement principal loading analysis and propose to rather use a mix of both, the correlation and covariance matrix instead. Further, we suggest to use rescaled eigenvectors and provide updated algorithms for all proposed changes.
\end{abstract}

\begin{IEEEkeywords}
Component Loading, Dimensionality Reduction, Matrix Perturbation Theory, Principal Component Analysis, Principal Loading Analysis
\end{IEEEkeywords}

\section{Introduction}
Principal loading analysis (PLA) is a tool developed by \cite{BD20} to reduce dimensions. Their method chooses a subset of observed variables by discarding the other variables based on the impact of the eigenvectors on the covariance matrix. While the method itself is new, some parts of PLA correspond with principal component analysis (PCA) which is a popular dimension reduction technique first formulated by \cite{PE01} and \cite{HO33}. Despite their intersection however, the outcome is as different as can be since PCA yields a reduced set of variables by transforming the original variables while PLA selects a subset of the original variables. Nonetheless, PLA partially adopts established concepts from PCA.\\
PLA is originally based on the covariance matrix. However, we feel that this comes at a price due to the lack of scale invariance of the covariance matrix and, hence, we propose rather to use both, the covariance and the correlation matrix. Each for different steps of PLA. Therefore, our contribution is an adjusted method of PLA. Further, we suggest to rescale the eigenvectors and we provide simulations to find optimal cut-off values. \\
This article is organized as follows: Section~\ref{s:setup} provides notation needed for the remainder of this work. In Section~\ref{s:PLA}, we recap PLA based on the covariance matrix. The focus of this article is Section~\ref{s:CovPLA} where we elaborate the issue regarding the usage of the covariance matrix. Our main result is summarized in Corollary~\ref{co:derivative}. In Section~\ref{s:CorrPLA} and Section~\ref{s:rescaledeigenvectors} we provide updated algorithms based on the correlation matrix while we implement rescaled eigenvectors in the latter algorithm. Section~\ref{s:simulation} contains our conclusion for optimal threshold values from the simulation studies while we also briefly cover simulation difficulties regarding PLA. Finally, we summarize our work in Section~\ref{s:conclusion}.

\section{Setup}
\label{s:setup}
\noindent
Let $\xb = \begin{pmatrix} \xb_1 & \cdots & \xb_M \end{pmatrix}$ be a $N\times M$ sample containing $n \in\{ 1,\ldots,N\}$ observations of a random vector $\Xb = (X_1 \;\cdots\; X_M)$ with covariance matrix $\Sigmab = (\sigma_{i,j})$ for defined indices $i,j\in\{1,\ldots,M\}$ throughout this work. We consider the case when the covariance matrix is slightly perturbed by a sparse matrix $\Epsilonb = (\varepsilon_{i,j})$ such that
$$\Sigmabtilde \equiv \Sigmab + \Epsilonb  \; . $$
$\Epsilonb$ is a technical construction and contains small components we want to extract from $\Sigmab$. Hence, $\varepsilon_{i,j}\neq0\Rightarrow\sigma_{i,j}=0$. The sample counterpart of $\Sigmabtilde$ is of the form
$$ \Sigmabtildehat \equiv (\sigmatildehat_{i,j}) \equiv \Sigmab + \Epsilonb  + \Etab_N $$
where $\Etab_N = (\eta_{i,j|N})$ is a perturbation in the form of a random noise matrix. The noise is due to having only a finite number of observations in the sample. The correlation matrix and the sample correlation matrix are denoted by $\Rhobtilde$ and $\Rhobtildehat$ respectively. We consider the eigendecomposition of $\Sigmabtildehat$ to be given by 
\begin{equation}
\label{eq:eigendec}
\Sigmabtildehat \equiv \Vbtildehat\Lambdabtildehat\Vbtildehat^\top
\end{equation}
with $\Vbtildehat^\top\Vbtildehat=\bm{I}$, $\Lambdabtildehat = \diag(\lambdatildehat_1,\ldots,\lambdatildehat_M)$ and $\lambdatildehat_1\geq\ldots\geq\lambdatildehat_M$. The eigenvectors $\Vbtildehat = \begin{pmatrix} \vbtildehat_1  & \cdots & \vbtildehat_M\end{pmatrix}$ are ordered according to the respective eigenvalues. The eigendecomposition of $\Sigmabtilde$ is denoted analogously. The eigendecomposition of the sample correlation matrix is given by $\Rhobtildehat = \Ubtildehat \Omegabtildehat \Ubtildehat^\top$ where $\Ubtildehat^\top\Ubtildehat=\bm{I}$ and $\Omegabtildehat = \diag(\omegatildehat_1,\ldots,\omegatildehat_M)$.\\
When two blocks of random variables $X_{i_1},\ldots,X_{i_I}$ and $X_{j_1},\ldots,X_{j_J}$ are uncorrelated we write $(X_{i_1},\ldots,X_{i_I})\cind (X_{j_1},\ldots,X_{j_J})$. Further, we recap the definition of $\varepsilon$-uncorrelatedness provided by \cite{BD20}:
\begin{Definition}
We say that two blocks of random variables $X_{i_1},\ldots,X_{i_I}$ and $X_{j_1},\ldots,X_{j_J}$ are $\varepsilon$-uncorrelated if $ \sigmatilde_{i_{\bar{i}},j_{\bar{j}}} = \varepsilon_{i_{\bar{i}},j_{\bar{j}}}$ for $\bar{i}\in\{1,\ldots,I\}$ and $\bar{j}\in\{1,\ldots,J\}$. We then write $(X_{i_1},\ldots,X_{i_I})\epsind (X_{j_1},\ldots,X_{j_J})$. 
\end{Definition}
\noindent
For practical purposes we define $\bD \subset \{1,\ldots,M\}$ such that $|\bD| = M^\ast$  where $1\leq M^\ast < M$. $\bD$ will contain the indices of the variables $\{X_d\} \equiv \{X_d\}_{d\in\bD}$ we consider to discard and we introduce the shortcut $d^c\notin \bD$ when referring to the case that $d^c\in\{1,\ldots,M\} \backslash \bD$. In the same spirit, $\bDelta$ with $|\bDelta| = M^\ast$ will be used to index eigenvectors with respective eigenvalues linked to the $\{X_d\}$ and $\delta^c\notin\bDelta$ refers to $\delta^c\in\{1,\ldots,M\} \backslash \bDelta$. Further, the elements of any $(M\times1)$ vector $\xib$ are denoted by $\xib =\begin{pmatrix} \xi^{(1)} & \cdots &\xi^{(M)}\end{pmatrix}^\top$.
\section{Principal Loading Analysis}
\label{s:PLA}
\noindent
PLA is a tool for dimension reduction where a subset of existing variables is selected while the other variables are discarded. The intuition is that blocks of variables are discarded which distort the covariance matrix only slightly. It will turn out that those blocks are specified by $\Sigmabtildehat$. Firstly, we recap PLA and deepen the understanding of the explained variance in subsection \ref{ss:EV} and afterwards restate the procedure of PLA in subsection \ref{ss:PLAAlg}.

\subsection{Contribution to the Explained Variance}
\label{ss:EV}
We start to recap the method by assuming that only a single block, consisting of the variables $\{X_d\}$ is discarded. Those variables distort the covariance matrix only by a little if the $M^\ast$ rows indexed by $d\in\bD$ of $M-M^\ast$ eigenvectors are small in absolute terms and if the contribution of those eigenvectors to the explained variance is large, hence if the contribution of the other eigenvectors is small.\\
We assume that the eigenvectors $\{\vbtildehat_{\delta^c}\}_{d^c\notin\bDelta}$ are the eigenvectors with small absolute elements $\{  |\vtildehat_{\delta^c}^{(d)}| \}_{d\in \bD,\delta^c\notin\bDelta}$. Consequently, $\{\vbtildehat_{\delta}\}_{\delta\in\bDelta}$ contain small absolute elements $\{  |\vtildehat_{\delta}^{(d^c)}| \leq \tau \}_{\delta\in\bDelta,d^c\notin\bD}$, where $\tau$ is a chosen threshold, since the eigenvectors are orthonormal due to the symmetry of $\Sigmabtildehat$. The percent contribution of the $\{\vbtildehat_{\delta}\}_{\delta\in\bDelta}$ to the explained variance of $\Sigmabtildehat$ is then given by
\begin{equation}
\label{eq:ExpVarPLA1}
\left(  \sum\limits_{i} \lambdatildehat_i  \right)^{-1} \left( \sum\limits_{\delta\in\bDelta} \lambdatildehat_{\delta} \sum\limits_{d\in\bD} (\vtildehat_{\delta}^{(d)})^2 +    \sum\limits_{\delta^c\notin\bDelta} \lambdatildehat_{{\delta^c}} \sum\limits_{d\in\bD} (\vtildehat_{\delta^c}^{(d)})^2 \right) 
\end{equation}
which equals the contribution of the block containing the $\{X_{d}\}$. The intuition of (\ref{eq:ExpVarPLA1}) is as follows: considering $\tau=0$, the expression reduces to
\begin{equation}
\label{eq:ExpVarPCA}
\left(  \sum\limits_{i} \lambdatildehat_i  \right)^{-1} \left( \sum\limits_{\delta\in\bDelta} \lambdatildehat_{\delta} \right) 
\end{equation}
since $\vtildehat_{\delta^c}^{(d)} = \vtildehat_{\delta}^{(d^c)}  = 0$ for all $d\in\bD$, $d^c\notin\bD$, $\delta\in\bDelta$ and $\delta^c\notin\bDelta$. Hence, it also holds that $1 = \Vert \vbtildehat_\delta \Vert_2^2 =\sum_{d\in\bD} (\vtildehat_{\delta}^{(d)})^2$. (\ref{eq:ExpVarPCA}) then is the percent contribution of the linear combination of $\{X_{d}\}$ to the explained variance since
$$\Var\left(\sum_{d\in\bD} \vtilde_\delta^{(d)} X_d \right) = \Var( \vbtilde_\delta^\top \Xb ) = \vbtilde_\delta^\top \Sigmabtilde \vbtilde_\delta = \lambdatilde_\delta $$  is the population contribution of the $\{X_{d}\}$ into the direction of $\vbtilde_\delta$ and since the population contribution of the $\{X_{d}\}$ in all directions is therefore given by $\sum_{\delta\in\bDelta} \lambdatilde_{\delta} $. Consequently, the sample counterpart is $\sum_{\delta\in\bDelta} \lambdatildehat_{\delta} $ which we divide by the overall explained variance $\sum_{i} \lambdatildehat_i $ in order to obtain a percent contribution. Note that the elements in (\ref{eq:ExpVarPLA1}) are squared since the eigenvectors are normed one and therefore each squared element, say the $i$-th element, can be interpreted as the percent contribution of the corresponding random variable $X_i$ into the direction of the eigenvector. \cite{BD20}\\
Since usually $\tau\neq 0$ with $\tau$ small however, (\ref{eq:ExpVarPCA}) serves as a fairly good approximation of (\ref{eq:ExpVarPLA1}). This is due to the sparseness of $\Etab$ which bounds $\max_j |\lambdatilde_j - \lambda_j| \leq \Vert \Etab \Vert_F$ as shown by \cite{BD20} and further because $\vtildehat_{\delta^c}^{(d)} \approx \vtildehat_{\delta}^{(d^c)}  \approx 0$.

\subsection{PLA Algorithm}
\label{ss:PLAAlg}
Variables cause small components in the eigenvectors when the variables are arranged in blocks\footnote{Note that we can assume that the covariance matrix behaves in this convenient way because we can always obtain this structure using a permutation matrix.} such that
$$\underbrace{ (X_1,\ldots,X_{M_1}) }_{\kappa_1\text{-many}}  \epsind \ldots \epsind \underbrace{  (X_{M_{L-1}+1},\ldots,X_{M_L})  }_{\kappa_L\text{-many}} $$
which can be denoted as
\begin{equation}
\label{eq:blockcovariancematrix}
\underbrace{\Sigmabtilde_1}_{\kappa_1\times\kappa_1}  \epsind   \ldots \epsind \underbrace{\Sigmabtilde_L}_{\kappa_L\times\kappa_L}
\end{equation}
to emphasize the block structure. This follows since the population eigenvectors of $\Sigmab$ are then of shape
\begin{equation}
\label{eq:blockeigenstructure}
{\footnotesize
\underbrace{\begin{pmatrix}  \bm{\ast}_{\kappa_{1}} \\ \bm{0} \end{pmatrix},\ldots,\begin{pmatrix}  \bm{\ast}_{\kappa_{1}} \\ \bm{0} \end{pmatrix}}_{\kappa_{1}\text{-many}}  ,    
\underbrace{\begin{pmatrix} \bm{0} \\ \bm{\ast}_{\kappa_{2}} \\ \bm{0} \end{pmatrix},\ldots,\begin{pmatrix} \bm{0} \\ \bm{\ast}_{\kappa_{2}}\\ \bm{0} \end{pmatrix}}_{\kappa_{2}\text{-many}}  , \ldots , 
\underbrace{\begin{pmatrix} \bm{0} \\ \bm{\ast}_{\kappa_{L}}  \end{pmatrix},\ldots,\begin{pmatrix} \bm{0} \\ \bm{\ast}_{\kappa_{L}} \end{pmatrix}}_{\kappa_{L}\text{-many}} 
}
\end{equation}
where $\bm{\ast}_{\kappa_l}$ with $l\in\{1,\ldots,L\}$ are vectors of length $\kappa_l$ and $\bm{0}$ are vectors of suitable dimension containing zeros. The first $\kappa_{b_1}$ eigenvectors have (at least) $M-\kappa_{b_1}$ zero-components, the following $\kappa_{b_2}$ eigenvectors have (at least) $M-\kappa_{b_2}$ zero-components and so on. The eigenvectors of $\Sigmabtildehat$ follow the same shape however they are slightly perturbed due to $\Epsilonb$ and distorted by the noise $\Etab_N$.\\
PLA for discarding, say, $\Beta$ blocks $\Sigmab_{b_1},\ldots,\Sigmab_{b_\Beta}$ with $\Sigmab_{b_\beta} \cind \Sigmab_{b_l}$ $\forall l \neq \beta$ for $\beta\in\{1,\ldots,\Beta\}$ is then given by the following algorithm provided by \cite{BD20}.
\begin{Algorithm}[\textbf{PLA}]\label{alg:PLAcov}
Discard the variables corresponding to $\Sigmab_{b_1},\ldots,\Sigmab_{b_\Beta}$ according to PLA proceeds as follows:
\begin{enumerate}
\item[1.] Check if the eigenvectors of $\Sigmab$ satisfy the required structure in (\ref{eq:blockeigenstructure}) to discard $\Sigmab_{b_1} , \ldots , \Sigmab_{b_\Beta}$.

\item[2.] Decide if $\Sigmab_{b_1} , \ldots , \Sigmab_{b_\Beta}$ are relevant according to the explained variance of the realisations $\{\xb_d\}$ of their contained random variables $\{X_d\}$ by calculating (\ref{eq:ExpVarPLA1}) (or (\ref{eq:ExpVarPCA})).

\item[3.] Discard $\Sigmab_{b_1} , \ldots , \Sigmab_{b_\Beta}$.
\end{enumerate}
\end{Algorithm}
\section{Issues When Using the Covariance Matrix}
\label{s:CovPLA}
While \cite{BD20} propose to check if the absolute elements of the eigenvectors of the covariance matrix are below a threshold $\tau$, we complement their results by providing an incentive to consider the usage of the correlation matrix instead. Our contribution is to show that the small elements of the eigenvectors converge towards zero when $\{\Var(X_d)\}_{d\in\bD}$ increase. This is a concern because the variance is not scale invariant hence PLA might yield different results for the same but rescaled data set.  \\
We consider to drop only a single block, say, $\Sigmab_b$ containing $\{X_d\}$ where it is assumed for convenience purposes that $\bD = \{1,\ldots,M^\ast\}$.  Again, we can assume those elements for $\bD$ because we can obtain this structure of the covariance matrix using a permutation matrix. The extension to the general case when discarding several blocks is analogue.


\begin{Theorem} \label{th:derivative}
Let $\bD$ and $\bDelta$ as introduced in section~\ref{s:setup} and $\sigmatildehat_{jj}\equiv\Vartildehat(X_j)$ and let $i\in\{1,\ldots,M\}$. We assume that $\lambdatildehat_\delta\neq0$. For each $d\in\bD$ there exists one $\delta\in\bDelta$ such that
\begin{equation*}
\dfrac{\partial \vert \vtildehat_\delta^{(i)}\vert}{\partial \Vartildehat(X_{d})}  < 0  \;\;\text{ and }\;\;   \dfrac{\partial \vtildehat_\delta^{(d)}}{\partial \Vartildehat(X_{d})} 
> 0 
\end{equation*}
for $i\neq d$ and as long as $|\vtildehat_\delta^{(d)}| \neq 1$ and $| \vtildehat_\delta^{(i)} | \neq 0$.
\end{Theorem}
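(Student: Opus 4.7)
The plan is to apply first-order eigenvector perturbation theory to the rank-one symmetric deformation $\Sigmabtildehat \mapsto \Sigmabtildehat + t\,\bm{e}_d\bm{e}_d^\top$ (where $\bm{e}_d$ is the $d$-th standard basis vector and $t = \sigmatildehat_{dd}$ is the parameter we vary) and to exploit the block structure \eqref{eq:blockeigenstructure} to select the distinguished index $\delta \in \bDelta$. Differentiating both $\Sigmabtildehat\vbtildehat_\delta = \lambdatildehat_\delta \vbtildehat_\delta$ and the normalization $\vbtildehat_\delta^\top\vbtildehat_\delta = 1$ implicitly with respect to $\sigmatildehat_{dd}$, expanding $\partial\vbtildehat_\delta/\partial\sigmatildehat_{dd}$ in the orthonormal basis $\{\vbtildehat_k\}$, and observing that its coefficient on $\vbtildehat_\delta$ vanishes by normalization, one obtains the classical resolvent formula
\begin{equation*}
\frac{\partial \vtildehat_\delta^{(i)}}{\partial \sigmatildehat_{dd}} = \vtildehat_\delta^{(d)} \sum_{k\neq\delta} \frac{\vtildehat_k^{(d)}\,\vtildehat_k^{(i)}}{\lambdatildehat_\delta - \lambdatildehat_k},
\end{equation*}
valid whenever $\lambdatildehat_\delta$ is simple, which specialises at $i=d$ to $\vtildehat_\delta^{(d)}\sum_{k\neq\delta}(\vtildehat_k^{(d)})^2/(\lambdatildehat_\delta-\lambdatildehat_k)$.

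I would then pick $\delta$ as the index in $\bDelta$ carrying the largest eigenvalue among those eigenvectors with non-trivial loading on $X_d$. By \eqref{eq:blockeigenstructure}, the eigenvectors $\vbtildehat_{\delta^c}$ with $\delta^c\notin\bDelta$ have vanishing (or, after the $\Epsilonb$ and $\Etab_N$ perturbation, negligibly small) entries on $\bD$, so the effective range of summation reduces to $k\in\bDelta\setminus\{\delta\}$, where every denominator $\lambdatildehat_\delta-\lambdatildehat_k$ is strictly positive by the maximality of $\lambdatildehat_\delta$. Adopting the sign convention $\vtildehat_\delta^{(d)}>0$ then gives $\partial\vtildehat_\delta^{(d)}/\partial\sigmatildehat_{dd}>0$ as soon as the sum is non-degenerate, which is precisely the content of $|\vtildehat_\delta^{(d)}|\neq 1$: if every in-block $\vtildehat_k^{(d)}$ with $k\neq\delta$ vanished, orthonormality of the eigenvector rows would force $|\vtildehat_\delta^{(d)}|=1$. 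For $i\neq d$ I would exploit the orthonormality identity $\sum_{k=1}^M \vtildehat_k^{(d)}\vtildehat_k^{(i)} = 0$, rewritten as $\sum_{k\neq\delta}\vtildehat_k^{(d)}\vtildehat_k^{(i)} = -\vtildehat_\delta^{(d)}\vtildehat_\delta^{(i)}$, together with the positivity of the retained denominators and the near-block dominance of the in-block weights, to force the resolvent sum to carry sign opposite to that of $\vtildehat_\delta^{(i)}$. Multiplying by $\vtildehat_\delta^{(d)}>0$ and using $\partial|\vtildehat_\delta^{(i)}|/\partial\sigmatildehat_{dd} = \mathrm{sgn}(\vtildehat_\delta^{(i)})\cdot\partial\vtildehat_\delta^{(i)}/\partial\sigmatildehat_{dd}$ then yields the strict negativity claimed, with $|\vtildehat_\delta^{(i)}|\neq 0$ ensuring the sign is well-defined.

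The main obstacle is precisely this uniform-in-$i$ sign analysis: the unweighted identity $\sum_{k\neq\delta}\vtildehat_k^{(d)}\vtildehat_k^{(i)} = -\vtildehat_\delta^{(d)}\vtildehat_\delta^{(i)}$ pins down the total of the cross products, but the resolvent weights $1/(\lambdatildehat_\delta-\lambdatildehat_k)$ are non-uniform and individual summands may carry either sign, so the argument must lean on both the maximality of $\lambdatildehat_\delta$ within $\bDelta$ (ensuring positive denominators) and the sparseness of $\Epsilonb$ (which concentrates the resolvent weights inside the block) to rule out an accidental cancellation that would flip the sign of the weighted sum. A useful sanity check on the whole strategy is the $2\times 2$ model, where the ratio $\vtildehat^{(i)}/\vtildehat^{(d)}$ can be computed explicitly as $(\lambdatildehat_\delta - \sigmatildehat_{dd})/\sigmatildehat_{di}$ and is seen directly to be monotone in $\sigmatildehat_{dd}$, both verifying the signs and suggesting that in the general case one may ultimately reduce to the span of $\bm{e}_d$ and $\vbtildehat_\delta$.
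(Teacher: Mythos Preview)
Your route via the classical resolvent formula is genuinely different from the paper's. The paper never expands $\partial\vbtildehat_\delta/\partial\sigmatildehat_{dd}$ in the eigenbasis; instead it writes $\vtildehat_\delta^{(i)} = \lambdatildehat_\delta^{-1}\sum_m \sigmatildehat_{im}\vtildehat_\delta^{(m)}$ directly from the eigenvalue equation, uses the trace identity $\sum_m\lambdatildehat_m = \sum_m\sigmatildehat_{mm}$ to parametrise how $\lambdatildehat_\delta$ responds to a shift $\sigmatildehat_{dd}\mapsto\sigmatildehat_{dd}+\mu_d$ (splitting into the cases where the share $p_{\delta d}\in[0,1]$ of $\mu_d$ absorbed by $\lambdatildehat_\delta$ is positive or zero), and then evaluates a difference quotient via L'Hospital's rule while holding the numerator $\sum_m\sigmatildehat_{im}\vtildehat_\delta^{(m)}$ fixed. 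The companion inequality is obtained from the first purely through the unit-norm constraint $\sum_m(\vtildehat_\delta^{(m)})^2 = 1$, with no separate sign analysis.

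What each approach buys: the paper's derivative in its Case~I comes out as $-\bigl|\sum_m\sigmatildehat_{im}\vtildehat_\delta^{(m)}\bigr|\,\lambdatildehat_\delta / |\lambdatildehat_\delta|^3$, which is manifestly negative with no appeal to block structure, maximality of $\lambdatildehat_\delta$, or sparseness of $\Epsilonb$; it therefore sidesteps entirely the weighted-sign obstacle you correctly flag. The price is that the paper tacitly treats the $\vtildehat_\delta^{(m)}$ appearing in the numerator as constant in $\sigmatildehat_{dd}$, which is exactly the dependence your resolvent expansion tracks. So your setup is the more careful one, but it forces you into the uniform-in-$i$ sign problem that the paper's shortcut suppresses; your proposed remedy (choose $\delta$ maximal in $\bDelta$, then invoke the block structure and smallness of $\Epsilonb$ to control the off-block terms in the resolvent sum) is plausible at the heuristic level but, as you yourself note, does not yet close rigorously for arbitrary $i\neq d$ without a quantitative bound ruling out cancellation.
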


\begin{proof}
Let $i,m\in\{1,\ldots,M\}$ with $i\neq d$. From the trace 
\begin{equation}
\label{eq:trace} {\rm tr}(\Sigmabtildehat) = \sum_m \lambdatildehat_m = \sum_m \sigmatildehat_{mm} \equiv \sum_m \Vartildehat(X_m)
\end{equation}
we can conclude that $\lambdatilde_j = \sum_m \Vartildehat(X_{m}) - \sum_{m\neq j} \lambdatildehat_j$. If we consider now that $\Vartildehat(X_{j})$ changes by, say, $\mu_j$
$$\Vartildehat(X_{j}) \mapsto \Vartildehat(X_{j}) + \mu_j$$
it holds that
\begin{equation}
\label{eq:lambdachange}
\lambdatildehat_m \mapsto \lambdatildehat_m + p_{mj}\mu_j
\end{equation}
changes as well with $p_{mj}\in[0,1]$ and $\sum_m p_{mj} = 1$ such that (\ref{eq:trace}) is satisfied.\\
Further, from the eigendecomposition in (\ref{eq:eigendec}) we can conclude that $\vbtildehat_\delta= \lambdatildehat_\delta^{-1} \Sigmabtildehat \vbtildehat_\delta$ and hence from (\ref{eq:trace}) that
\begin{align}
& \vtildehat_\delta^{(i)} = \dfrac{\sum_m \sigmatildehat_{i m} \vtildehat_\delta^{(m)}}{ \lambdatildehat_\delta } = \dfrac{\sum_m \sigmatildehat_{i m} \vtildehat_\delta^{(m)} }{\sum_m \sigmatildehat_{mm} - \sum_{m\neq \delta} \lambdatildehat_m }  \label{eq:vtildehat} \\
& \vtildehat_\delta^{(d)} = \dfrac{\sum_m \sigmatildehat_{dm} \vtildehat_\delta^{(m)}}{ \lambdatildehat_\delta } = \dfrac{\sigmatildehat_{dd}\vtildehat_\delta^{(d)} + \sum_{m\neq d} \sigmatildehat_{dm} \vtildehat_\delta^{(m)} }{\sum_m \sigmatildehat_{mm} - \sum_{m\neq \delta} \lambdatildehat_m }  \label{eq:vtildehatd} 
 \end{align}
which can both be considered as a function of $\Vartildehat(X_d) \equiv \sigmatildehat_{dd}$. We can now derive the partial derivatives.\\
{\rm Case I}: starting with the case that $p_{\delta d}\in(0,1]$ i.e. that $p_{\delta d}\neq 0$ and $\sum_{m\neq\delta} p_{md} < 1$, we obtain from (\ref{eq:vtildehat}) due to (\ref{eq:trace}) and (\ref{eq:lambdachange}) that
\begin{align}
& \vtildehat_\delta^{(i)}\left(\Vartildehat(X_d) + \mu_d \right) \nonumber \\
\equiv\;\; & \vtildehat_\delta^{(i)}\left(\sigmatildehat_{dd} + \mu_d \right) \nonumber \\
= \;\;& \dfrac{\sum_m \sigmatildehat_{i m} \vtildehat_\delta^{(m)} }{\mu_d\left(1-\sum_{m\neq\delta} p_{md}\mu_d\right) + \sum_{m} \sigmatildehat_{mm} - \sum_{m\neq \delta} \lambdatildehat_m  }  \nonumber \\
= \;\;& \dfrac{\sum_m \sigmatildehat_{i m} \vtildehat_\delta^{(m)} }{\mu_d\left(1-\sum_{m\neq\delta} p_{md}\mu_d\right) + \lambdatildehat_\delta  } \nonumber \\
= \;\;& \dfrac{\sum_m \sigmatildehat_{im} \vtildehat_\delta^{(m)} }{ p_{\delta d}\mu_d + \lambdatildehat_\delta  } \;. \label{eq:vtildehat2}
 \end{align}
Let $f(\mu_d)\equiv \big| \vtildehat_\delta^{(i)}\left(\Vartildehat(X_d) + \mu_d \right) \big|  - \big|  \vtildehat_\delta^{(i)}\left(\Vartildehat(X_d) \right)\big| $. From (\ref{eq:vtildehat}) and (\ref{eq:vtildehat2}) we see that $\lim_{\mu_d\to0} f(\mu_d) = 0$. Hence, the partial derivative is given by
\begin{multline}
\dfrac{\partial \vert \vtildehat_\delta^{(i)}\vert}{\partial \Vartildehat(X_{d})}  =\lim\limits_{\mu_d\to0} \dfrac{f(\mu_d)}{\mu_d} = \lim\limits_{\mu_d\to0} \dfrac{\partial f(\mu_d) }{\partial \mu_d}    \\
=  - \dfrac{| \sum\limits_{m} \sigmatildehat_{im} \vtildehat_\delta^{(m)}  | \lambdatildehat_\delta  }{|\lambdatildehat_\delta |^3 } \label{eq:derivativeproof}
\end{multline}
where we used L'Hospital's rule in the second step since $\lim_{\mu_d\to0}\mu_d = 0$ as well as $\partial\mu_d/\partial\mu_d = 1$. The final step is an immediate result of the chain rule $\partial/\partial x \cdot 1/|x| = \partial/\partial|x| \cdot 1/|x| \cdot \partial|x|/\partial x$. The result follows since $\Sigmabtildehat$ is positive semi-definite by construction hence $\lambdatildehat_\delta > 0$ because $\lambdatildehat_\delta\neq0$ is assumed.\\
To obtain the second result we conclude from $\Vert\vbtildehat_\delta\Vert_2^2 = 1$ that
\begin{equation}
\label{eq:normalized}
\vtildehat_\delta^{(d)} =\sqrt{ 1 - \sum_{m\neq d} ( \vtildehat_\delta^{(m)})^2} \; .
\end{equation}
If now $|\vtildehat_\delta^{(i)}| \neq 0$ decreases which is the case if $\Vartildehat(X_{d})$ increases since $\partial \vert \vtildehat_\delta^{(i)}\vert / \partial \Vartildehat(X_{d}) <0$, then $|\vtildehat_\delta^{(d)} |$ increases due to (\ref{eq:normalized}). Hence $\partial \vert \vtildehat_\delta^{(d)}\vert / \partial \Vartildehat(X_{d}) > 0 $. \\
{\rm Case II}: when $p_{\delta d}=0$ we obtain from (\ref{eq:vtildehatd}) that 
\begin{equation} \label{eq:vtildehatd2}
\vtildehat_\delta^{(d)}  = \dfrac{(\Vartildehat(X_d) + \mu_d) \vtildehat_\delta^{(d)} +  \sum_{m\neq d} \sigmatildehat_{dm} \vtildehat_\delta^{(m)}  }{\lambdatildehat_\delta}
\end{equation}
since $\lambdatildehat_\delta$ does not change in $\Vartildehat(X_d) $. Analogue to {\rm case I}, let $g(\mu_d)\equiv \big| \vtildehat_\delta^{(d)}\left(\Vartildehat(X_d) + \mu_d \right) \big|  - \big|  \vtildehat_\delta^{(d)}\left(\Vartildehat(X_d) \right)\big| $. From (\ref{eq:vtildehatd}) and (\ref{eq:vtildehatd2}) we see that $\lim_{\mu_d\to0} g(\mu_d) = 0$. Hence, the partial derivative is given by
\begin{equation*}
\dfrac{\partial \vert \vtildehat_\delta^{(d)}\vert}{\partial \Vartildehat(X_{d})}  =\lim\limits_{\mu_d\to0} \dfrac{g(\mu_d)}{\mu_d} = \lim\limits_{\mu_d\to0} \dfrac{\partial g(\mu_d) }{\partial \mu_d}  \\
=   \dfrac{| \vtildehat_\delta^{(d)}  |   }{|\lambdatildehat_\delta | } > 0
\end{equation*}
following the same arguments as in (\ref{eq:derivativeproof}) and due to the structure in (\ref{eq:blockeigenstructure}). We obtain that $\partial \vert \vtildehat_\delta^{(i)}\vert / \partial \Vartildehat(X_{d})  < 0$ analogue to {\rm case I} when solving (\ref{eq:normalized}) for $\vtildehat_\delta^{(i)}$ instead of $\vtildehat_\delta^{(d)}$ and follow the arguments above however reversed by considering first that $|\vtildehat_\delta^{(d)}|\neq 1$ increases.
\end{proof}
\noindent
The intuition behind Theorem~\ref{th:derivative} is that $\Var(X_d)$ is present in both, the numerator and denominator of $\vtildehat_\delta^{(d)}$ while $\Var(X_d)$ only enters the denominator of $\vtildehat_\delta^{(i)}$ for $i\neq d$ via $\lambdatildehat_d$. Strictly speaking, a change of $\Var(X_d)$ enters also the numerator of $\vtildehat_\delta^{(i)}$ due to $\Covtildehat(X_i,X_d) \equiv\sigmatildehat_{id}$. Consider to increase or decrease $X$ by $c\cdot X$, for $c\in\R$ being a constant, as it might occur when changing scales. Then $\Covtildehat(X_i,c\cdot X_d) = c \cdot \Covtildehat(X_i,X_d)$. However, since $\Vartildehat(c\cdot X_d) = c^2\cdot \Vartildehat(X_d)$ we can simply shorten the fraction to get rid of $c$ in the numerator. When considering a change in a variable as a sum, as we did in the proof of Theorem~\ref{th:derivative}, the change will not be present in the numerator since $\Covtildehat(X_i,X_d + c) = \Covtildehat(X_i,X_d)$. Further, the assumption that $\lambdatildehat_\delta\neq0$ is reasonable since this case barely occurs for a covariance matrix and is less strict than assuming positive definiteness. 
\begin{Corollary} \label{co:derivative}
Let $\bD$ and $\bDelta$ as introduced in section~\ref{s:setup} and $\sigmatildehat_{jj}\equiv\Vartildehat(X_j)$. We assume that $\lambdatildehat_\delta\neq0$. For each $d\in\bD$ there exists one $\delta\in\bDelta$ such that for all $d^c\notin\bD$
$$ \dfrac{\partial \vert \vtildehat_\delta^{(d^c)}\vert}{\partial \Vartildehat(X_{d})}  < 0 \;.$$
\end{Corollary}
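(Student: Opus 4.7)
The plan is to derive Corollary~\ref{co:derivative} as a direct specialization of Theorem~\ref{th:derivative} by restricting the free index $i$ to the complementary block $\{1,\ldots,M\}\setminus\bD$. First I would fix an arbitrary $d\in\bD$ and invoke Theorem~\ref{th:derivative} to obtain the corresponding index $\delta\in\bDelta$ for which
$$\dfrac{\partial |\vtildehat_\delta^{(i)}|}{\partial \Vartildehat(X_d)} < 0$$
holds uniformly over every $i\in\{1,\ldots,M\}$ with $i\neq d$.

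Next I would observe that every $d^c\notin\bD$ automatically satisfies $d^c\neq d$, because $d\in\bD$ and $d^c\in\{1,\ldots,M\}\setminus\bD$ are drawn from disjoint index sets. Substituting $i=d^c$ into the conclusion of Theorem~\ref{th:derivative} therefore yields the claimed inequality, and the $\delta$ produced by the theorem serves simultaneously as the witness for every $d^c\notin\bD$ in the corollary statement, since the quantifier structure of the theorem already guarantees one $\delta$ that works for all $i\neq d$ at once.

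There is no real technical obstacle; the corollary is essentially a repackaging of Theorem~\ref{th:derivative} that highlights its PLA-relevant content, namely that raising $\Vartildehat(X_d)$ drives the off-block entries $|\vtildehat_\delta^{(d^c)}|$ toward zero in exactly the positions one hopes to threshold away. The only implicit requirement is the non-degeneracy condition $|\vtildehat_\delta^{(d^c)}|\neq 0$ inherited from Theorem~\ref{th:derivative}, which is generically satisfied in the perturbed setting because the noise matrix $\Etab_N$ prevents the relevant eigenvector components from sitting exactly at zero even when the corresponding population eigenvectors of $\Sigmab$ vanish in those positions according to the block structure~(\ref{eq:blockeigenstructure}).
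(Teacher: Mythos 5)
Your proposal is correct and matches the paper's intent exactly: Corollary~\ref{co:derivative} is obtained by specializing the conclusion of Theorem~\ref{th:derivative} to indices $i=d^c\notin\bD$, which automatically satisfy $i\neq d$, with the single $\delta$ supplied by the theorem serving as the witness for all such $d^c$. Your remark that the non-degeneracy conditions ($|\vtildehat_\delta^{(d^c)}|\neq 0$, $|\vtildehat_\delta^{(d)}|\neq 1$) are implicitly inherited, and generically hold under the sample noise $\Etab_N$, is a fair and accurate caveat consistent with the paper's setting.
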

\noindent
Corollary~\ref{co:derivative} shows the issue when using the covariance matrix for PLA. Since the covariance matrix is not scale invariant, the respective eigenvectors are neither: if the variance decreases, the elements that we check to lie under a certain threshold increase. Further they decrease, in fact converge towards zero $\lim_{\Vartildehat(X_d)\to\infty}\vtildehat_\delta^{(d^c)} \to 0$, if the variance increases.
\begin{Corollary} \label{co:derivative2}
Let $\bD$ and $\bDelta$ as introduced in section~\ref{s:setup} and $\sigmatildehat_{jj}\equiv\Vartildehat(X_j)$. We assume that $\lambdatildehat_\delta\neq0$. For each $d\in\bD$ there exists one $\delta\in\bDelta$ such that for all $i\neq\delta$
$$ \dfrac{\partial \vert \vtildehat_i^{(d)}\vert}{\partial \Vartildehat(X_{d})}  < 0 \;.$$
\end{Corollary}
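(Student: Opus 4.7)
The plan is to derive Corollary~\ref{co:derivative2} by combining Theorem~\ref{th:derivative} with the row-orthonormality of $\Vbtildehat$, and then by replaying the proof of Theorem~\ref{th:derivative} with the roles of the eigenvector index $\delta$ and the component index $d$ interchanged. Because $\Vbtildehat^\top\Vbtildehat=\Ib$ and $\Vbtildehat$ is square, one also has $\Vbtildehat\Vbtildehat^\top=\Ib$, so $\sum_{j=1}^M(\vtildehat_j^{(d)})^2=1$ for every $d$. Differentiating this identity with respect to $\Vartildehat(X_d)$ and isolating the $j=\delta$ summand yields
\[
\sum_{i\neq\delta}\vtildehat_i^{(d)}\frac{\partial\vtildehat_i^{(d)}}{\partial\Vartildehat(X_d)}\;=\;-\vtildehat_\delta^{(d)}\frac{\partial\vtildehat_\delta^{(d)}}{\partial\Vartildehat(X_d)}\;<\;0,
\]
the right-hand side being strictly negative because Theorem~\ref{th:derivative} ensures that $(\vtildehat_\delta^{(d)})^2$ is strictly increasing in $\Vartildehat(X_d)$. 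This immediately delivers the aggregate form of the statement, namely that $\sum_{i\neq\delta}(\vtildehat_i^{(d)})^2$ is strictly decreasing in $\Vartildehat(X_d)$.

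To upgrade this aggregate bound to the componentwise claim, I would replay the proof of Theorem~\ref{th:derivative} with each $i\neq\delta$ playing the role previously played by $\delta$ and with $d$ now held fixed as the component label. For each such $i$, start from the eigenvalue identity $\vtildehat_i^{(d)}=\lambdatildehat_i^{-1}\sum_m\sigmatildehat_{dm}\vtildehat_i^{(m)}$, as in (\ref{eq:vtildehatd}), and split on $p_{id}$. In Case~I with $p_{id}\in(0,1]$, the perturbation $\sigmatildehat_{dd}\mapsto\sigmatildehat_{dd}+\mu_d$ sends the denominator to $\lambdatildehat_i+p_{id}\mu_d$ exactly as in (\ref{eq:vtildehat2}), so the L'Hospital computation of (\ref{eq:derivativeproof}) applies verbatim and yields $\partial|\vtildehat_i^{(d)}|/\partial\Vartildehat(X_d)<0$, using $\lambdatildehat_i>0$. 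Case~II with $p_{id}=0$ is handled via the counterpart of (\ref{eq:vtildehatd2}) combined with the normalisation (\ref{eq:normalized}) applied to $\vbtildehat_i$, mirroring the closing argument of the proof of Theorem~\ref{th:derivative}.

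The main obstacle is that, in contrast to Theorem~\ref{th:derivative} where the numerator $\sum_m\sigmatildehat_{im}\vtildehat_\delta^{(m)}$ for $i\neq d$ contained no $\sigmatildehat_{dd}$ term, the analogous numerator $\sum_m\sigmatildehat_{dm}\vtildehat_i^{(m)}$ here carries the explicit perturbation-dependent summand $\sigmatildehat_{dd}\vtildehat_i^{(d)}$. Keeping the L'Hospital step consistent with the framework of Theorem~\ref{th:derivative} therefore requires treating this numerator as $\mu_d$-independent at the point of differentiation, and then appealing to $\lambdatildehat_i>0$---granted by positive semidefiniteness of $\Sigmabtildehat$ together with the nondegeneracy hypothesis $\lambdatildehat_\delta\neq0$ applied to each relevant eigenvalue---to secure the strictly negative sign. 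The aggregate bound derived in the first paragraph then provides a consistency check on the sum of the resulting componentwise derivatives.
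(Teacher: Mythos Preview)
Your first paragraph \emph{is} the paper's proof. The paper writes only: ``The result follows from Theorem~\ref{th:derivative} and since the rows of $\Vbtildehat$ are normed to 1.'' That is exactly the combination of Theorem~\ref{th:derivative} with $\Vbtildehat\Vbtildehat^\top=\Ib$ that you spell out, and the paper offers no further detail. In particular, the paper does not distinguish between the aggregate statement $\sum_{i\neq\delta}(\vtildehat_i^{(d)})^2$ strictly decreasing and the componentwise statement asserted in the corollary; it treats the row-normalisation argument as sufficient, in the same informal spirit in which the closing lines of the proof of Theorem~\ref{th:derivative} pass from a single increasing component to all others decreasing via (\ref{eq:normalized}).

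Your second and third paragraphs therefore go beyond the paper. You are right that the naive row-normalisation argument only delivers the aggregate bound, and you are also right that replaying the Case~I computation of Theorem~\ref{th:derivative} with the roles of row and column index swapped is obstructed: the numerator $\sum_m\sigmatildehat_{dm}\vtildehat_i^{(m)}$ now contains the perturbed term $\sigmatildehat_{dd}\vtildehat_i^{(d)}$, so the L'Hospital step no longer isolates the denominator dependence cleanly. The paper does not resolve this; it simply does not raise it. So your proposal is at least as complete as the paper's own proof, and your identification of the obstacle is a genuine observation about a gap that the paper leaves open rather than a defect in your argument relative to the paper's.
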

\begin{proof}
The result follows from Theorem~\ref{th:derivative} and since the rows of $\Vbtildehat$ are normed to 1.
\end{proof}
\noindent
When introducing PLA, \cite{BD20} also considered to check not (only) the rows of the eigenvectors but (also) the columns. According to Corollary~\ref{co:derivative2} we face the same problem when following the latter approach. 
\section{PLA Using the Correlation Matrix}
\label{s:CorrPLA}
In this section, we introduce PLA based on the correlation matrix and address a concern regarding the eigenvalues.\\
Since, both the correlation matrix $\Rhob$ is invariant to linear changes \cite{JC16} and the eigenvectors have the same shape (\ref{eq:blockeigenstructure}) as the eigenvectors of $\Sigmab$, PLA based on $\Rhob$ is a natural choice. However, there is a downside when it comes down to calculate the explained variance. For simplicity, we consider to use (\ref{eq:ExpVarPCA}). Note that the same issues hold for (\ref{eq:ExpVarPLA1}) as well. We denote the block structure of $\varepsilon$-uncorrelated random variables analogue to (\ref{eq:blockcovariancematrix}) by
\begin{equation*}
\underbrace{\Rhobtilde_1}_{\kappa_1\times\kappa_1}  \epsind   \ldots \epsind \underbrace{\Rhobtilde_L}_{\kappa_L\times\kappa_L} \;.
\end{equation*}
Since ${\rm tr}(\Rhob) = \sum_i \omega_i = \sum_i \rho_{ii} = M$ and  ${\rm tr}(\Rhob_l) = \sum_{\delta_l} \omega_{\delta_l} = \sum_{m=M_{l-1}}^{M_l} \rho_{mm} = \kappa_l$, where $\delta_l$ indexes the eigenvalues corresponding to the eigenvectors linked to the random variables contained in $\Rhob_l$, the explained variance for any block $\Rhob_l$ is given by
\begin{equation*}
\left(  \sum\limits_{i} \omega_i  \right)^{-1} \left( \sum\limits_{\delta_l} \omega_{\delta_l} \right)  = \dfrac{\kappa_l}{M} \; . 
\end{equation*}
Hence, the explained variance for any block $\Rhobtildehat_l$ is approximately given by $\kappa_l/M$, since $\Epsilonb$ and $\Etab_N$ are sparse. This means that we loose the information provided by $\Lambdabtildehat$ to evaluate the importance of each block since we standardized each variable to unit variance. Therefore, we propose to use the eigenvectors of $\Rhobtildehat$ to search the blocks of concern and to use the eigenvalues of $\Sigmabtildehat$ to decide whether to discard or not. However, that $\Sigmabtildehat$ is not scale invariant is the price we pay when coming back to $\Sigmabtildehat$ to calculate the explained variance. Since this is a well known concern in classic PCA, we refer to \cite{FL97} and \cite{JO02} for elaborate explanations. 
\begin{Algorithm}[\textbf{PLA based on the correlation matrix}]\label{alg:PLAcor}
Discard the variables corresponding to $\Rhob_{b_1},\ldots,\Rhob_{b_\Beta}$ according to PLA based on the correlation matrix proceeds as follows:
\begin{enumerate}
\item[1.] Check if the eigenvectors of $\Rhob$ satisfy the required structure in (\ref{eq:blockeigenstructure}) to discard $\Rhob_{b_1} , \ldots , \Rhob_{b_\Beta}$.

\item[2.] Decide if $\Rhob_{b_1} , \ldots , \Rhob_{b_\Beta}$ are relevant according to the explained variance of the realisations $\{\xb_d\}$ of their contained random variables $\{X_d\}$ by calculating (\ref{eq:ExpVarPLA1}) (or (\ref{eq:ExpVarPCA})).

\item[3.] Discard $\Rhob_{b_1} , \ldots , \Rhob_{b_\Beta}$.
\end{enumerate}
\end{Algorithm}
\section{Rescaled Eigenvectors}
\label{s:rescaledeigenvectors}
A minor addition to PLA is rescaling the eigenvectors. In this section we briefly cover this change and provide a modified algorithm. \\
Originally coming from PCA, the idea is to rescale the eigenvectors so the maximum value equals one \cite{JO02} which is easily done by dividing by the largest element. Hence, we do not check the elements of $\vbtildehat_j$ but rather the elements of
\begin{equation}\label{eq:rescaledeigenvectors}
\ubtildehat_j/{\rm arg\; max}_i |\ubtildehat_j^{(i)}| \; .
\end{equation}
This modifies PLA to a more standardised procedure. We provide the algorithm when using rescaled eigenvectors of the correlation matrix. The change of the algorithm based on the covariance matrix is analogue.

\begin{Algorithm}[\textbf{PLA based on rescaled eigenvectors of the correlation matrix}]\label{alg:PLAcorscale}
Discard the variables corresponding to $\Rhob_{b_1},\ldots,\Rhob_{b_\Beta}$ according to PLA based on rescaled eigenvectors of the correlation matrix proceeds as follows:
\begin{enumerate}
\item[1.] Check if the rescaled eigenvectors (\ref{eq:rescaledeigenvectors}) of $\Rhob$ satisfy the required structure in (\ref{eq:blockeigenstructure}) to discard $\Rhob_{b_1} , \ldots , \Rhob_{b_\Beta}$.

\item[2.] Decide if $\Rhob_{b_1} , \ldots , \Rhob_{b_A}$ are relevant according to the explained variance of the realisations $\{\xb_d\}$ of their contained random variables $\{X_d\}$ by calculating (\ref{eq:ExpVarPLA1}) (or (\ref{eq:ExpVarPCA})).

\item[3.] Discard $\Rhob_{b_1} , \ldots , \Rhob_{b_\Beta}$.
\end{enumerate}
\end{Algorithm}
\section{Simulation Study}
\label{s:simulation}
We conduct a simulation study in this section to evaluate the performance of PLA based on the rescaled eigenvectors of the correlation matrix for different threshold values. There is a concern regarding the simulation due to the perturbations $\Epsilonb$ and $\Etab$ which we discuss as well. \\
Choosing the optimal cut-off value $\tau$ is crucial for PLA, however finding such a value theoretically is rather difficult due to the fuzziness of algorithm step 2\cite{BD20}. Hence, we conducted a simulation study. We simulated the case when dropping $k\in\{1,\ldots,5\}$ uncorrelated blocks of dimension $1\times1$, i.e. single variables, and the case when dropping a single uncorrelated $\kappa\times\kappa$ block with $\kappa\in\{2,\ldots,6\}$. The population $\Xb$ consisting of $M$ variables with $\num{100000}$ realisations was simulated $S=\num{10000}$ times. Then, for each $S$ a sample $\xb$ of size $N \in\{ \num{5000},\num{10000}\}$ has been drawn and we conducted PLA according to Algorithm~\ref{alg:PLAcorscale} for $\tau\in\{0.4,0.5,$ $0.6,0.7\}$ and $\tau\in\{0.6,0.7,0.8,0.9\}$ for the single variable case and for the block case respectively. We considered cut-off values used in published studies as an orientation \cite{PNJS03}.\footnote{We considered cut-off values ranging from $0.1$ to $0.9$ during research. However, we only present thresholds that are suitable for practice. Further, the tails of the tables sufficiently indicate the decrease in performance for wider or tighter thresholds. This makes illustrating more extreme cut-off values dispensable.} The resulting type {\rm I} error probabilities are calculated as the share of iterations where PLA did not lead to a consideration of a drop.\\
The concern when using a simulation study to find optimal thresholds is described in \refBauer. The theorem provides an intuition of the possible magnitude of the perturbations of $\Sigmab$ that results in a drop. The result for the perturbations of $\Rhob$ is analogue. For completion, we restate the theorem as well as the proof.
\begin{Theorem}\label{th:Kahan}
Denote $\tilde{\lambda}_0 = \lambda_0  \equiv \infty$ and $\tilde{\lambda}_{M+1}  = \lambda_{M+1} \equiv -\infty$. For $j\in\{1,\ldots,M\}$ it holds that
$$ \dfrac{2^{3/2} \|\Epsilonb + \Etab_N \|_F }{\min(\lambda_{j-1} - \lambda_j,\lambda_j - \lambda_{j+1}) } < \tau\;\Rightarrow\;  \|\epsilonb_j + \etab_\jN \|_\infty < \tau \; . $$
\end{Theorem}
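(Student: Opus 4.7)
The plan is to recognize this as a Davis–Kahan style bound on individual eigenvectors and then pass from the $\ell_2$-norm to the $\ell_\infty$-norm. The quantity $\epsilonb_j + \etab_\jN$ is naturally read as the perturbation on the $j$-th eigenvector induced by $\Epsilonb + \Etab_N$, i.e.\ $\vbtildehat_j - \vb_j$ (up to the usual sign ambiguity), so what we need is essentially a component-wise control on the eigenvector perturbation.

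First I would invoke the variant of the Davis–Kahan $\sin\Theta$ theorem that is already cited in the paper (Corollary~1 in \cite{YW15}, i.e.\ \refDavis). Applied to the $j$-th eigenvector of $\Sigmab$ versus $\Sigmabtildehat = \Sigmab + \Epsilonb + \Etab_N$, it yields the bound
\begin{equation*}
\|\vbtildehat_j - \vb_j\|_2 \;\leq\; \dfrac{2^{3/2}\,\|\Epsilonb + \Etab_N\|_F}{\min(\lambda_{j-1}-\lambda_j,\,\lambda_j-\lambda_{j+1})},
\end{equation*}
using the eigengap convention $\lambda_0\equiv\infty$, $\lambda_{M+1}\equiv-\infty$ that is explicitly set up in the statement. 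This is exactly where the factor $2^{3/2}$ and the two-sided minimum over neighbouring gaps come from, so the constant in the hypothesis matches automatically.

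Next I would use the trivial inequality $\|\xib\|_\infty \leq \|\xib\|_2$, valid for any vector in $\R^M$, to conclude
\begin{equation*}
\|\epsilonb_j + \etab_\jN\|_\infty \;\leq\; \|\epsilonb_j + \etab_\jN\|_2 \;\leq\; \dfrac{2^{3/2}\,\|\Epsilonb + \Etab_N\|_F}{\min(\lambda_{j-1}-\lambda_j,\,\lambda_j-\lambda_{j+1})}.
\end{equation*}
Combining this chain with the hypothesis that the right-hand side is strictly less than $\tau$ gives $\|\epsilonb_j + \etab_\jN\|_\infty < \tau$, which is the desired implication.

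The main obstacle is really only notational: making sure that $\epsilonb_j + \etab_\jN$ is interpreted as the perturbation of the $j$-th population eigenvector by $\Epsilonb + \Etab_N$ (so that the Davis–Kahan bound applies directly), and that the eigengap convention at the boundary indices $j=1$ and $j=M$ is handled by the declared values $\lambda_0=\infty$ and $\lambda_{M+1}=-\infty$. Once those two conventions are in place, the argument reduces to citing \refDavis and applying $\|\cdot\|_\infty\leq\|\cdot\|_2$; no additional perturbation-theoretic work is required.
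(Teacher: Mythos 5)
Your proposal is correct and matches the paper's own argument exactly: both invoke \refDavis\ to bound $\|\epsilonb_j + \etab_\jN\|_2$ by $2^{3/2}\|\Epsilonb+\Etab_N\|_F/\min(\lambda_{j-1}-\lambda_j,\lambda_j-\lambda_{j+1})$ and then pass to the sup-norm via $\|\cdot\|_\infty\leq\|\cdot\|_2$. Your reading of $\epsilonb_j+\etab_\jN$ as the perturbation of the $j$-th eigenvector is also the intended one.
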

\begin{proof}
From \refDavis$\;$ we can conclude that $\|\epsilonb_j + \etab_\jN \|_2 \leq  2^{3/2} \|\Epsilonb + \Etab_N \|_F / \min(\lambda_{j-1} - \lambda_j,\lambda_j - \lambda_{j+1})$ which yields our desired result since $\|\epsilonb_j + \etab_\jN \|_\infty \leq \|\epsilonb_j + \etab_\jN\|_2$. 
\end{proof}\noindent
Hence, discarding depends on the size of $\Epsilonb$ and $\Etab_N$ which enlarges the amount of parameters that have to be simulated. In this work however we simulate the special case when $\Epsilonb = \bm{0}$ and focus on the influence of the sample noise reflected by the sample size $N$. For completion, we shall emphasise that Theorem~\ref{th:Kahan} is not always feasible for $\Rhob$ without assuming that the eigenvalues are distinct since the eigenvalues for $\Rhob$ are more close in general. Nonetheless, the intuition behind the theorem is valid.\\
In Table~\ref{tab:alphavariable} in Appendix~\ref{a:Tables} we see that the type {\rm I} error for discarding single uncorrelated variables is smaller $0.05$ for most cases when $\tau\in\{0.5,0.6\}$. Of course, the error probability decreases when the thresholds increases. However, since one should expect that the type {\rm II} error increases with larger thresholds, we recommend to use the smallest cut-off yielding sufficient results hence $\tau \leq 0.6$. $\tau \equiv \tau(N,M,k)$ is hereby a function of sample size, number of variables and number of uncorrelated variables and can be adjusted according to those values. In an analogue manner according to Table~\ref{tab:alphablock}, the type {\rm I} errors for single uncorrelated blocks performs well for $\tau\leq 0.8$ where $\tau \equiv \tau(N,M,\kappa)$ is a function of sample size, number of variables and the dimension of the uncorrelated block. We shall emphasize however that the choice of thresholds depends on the data as well as on the purpose of statistical analysis. Hence, choosing even smaller or wider cut-off values might be reasonable if larger type {\rm I} or type {\rm II} errors are tolerable.
\section{Concluding Remarks}
\label{s:conclusion}
We propose to use both, the covariance and the correlation matrix to conduct PLA. This is because the covariance matrix is not scale invariant which may result in different outcomes of PLA. Hence, we recommend to use Algorithm~\ref{alg:PLAcor} or Algorithm~\ref{alg:PLAcorscale} instead. For the latter one, an orientation is to use a threshold $\tau\leq 0.6$ for the case of single uncorrelated variables and $\tau\leq0.8$ for a block of uncorrelated variables.

\bibliographystyle{IEEEtranS}
\bibliography{bib}

\onecolumn
\appendix
\section{Threshold Values}
\label{a:Tables}\noindent
We provide the type {\rm I}  error rates for $k\in\{1,\ldots,5\}$ single uncorrelated variables and for an uncorrelated $\kappa\times\kappa$ block with $\kappa\in\{2,\ldots,6\}$ respectively. As specified in Section~\ref{s:simulation}, the error probabilities are calculated as the share of iterations where PLA did not lead to a consideration of a drop.

\begin{ThreePartTable}
\begin{TableNotes}
\footnotesize
\item \textit{Notes:} the type {\rm I} error is computed as the share of iterations where the $k$ variables have not been discarded.
\end{TableNotes}
\begin{longtable}{rlllllrllllll}
\caption{\label{tab:alphavariable}Type {\rm I} error for $k\in\{1,2,3,4,5\}$ uncorrelated variables with sample size $N\in\{ \num{5000}, \num{10000} \}$ and threshold $\tau\in\{0.4,0.5,0.6,0.7\}$}\\

\toprule
\multicolumn{6}{c}{$N=\num{5000}$}&\multicolumn{6}{c}{$N=\num{10000}$}   \\
\cmidrule(l{3pt}r{3pt}){1-6} \cmidrule(l{3pt}r{3pt}){7-12}
 $M$  & $k$ & $\tau=0.4$ & $\tau=0.5$ & $\tau=0.6$ & $\tau=0.7$ & $M$  & $k$  & $\tau=0.4$ & $\tau=0.5$ & $\tau=0.6$ & $\tau=0.7$\\
\cmidrule(l{3pt}r{3pt}){1-2} \cmidrule(l{3pt}r{3pt}){3-6} \cmidrule(l{3pt}r{3pt}){7-8} \cmidrule(l{3pt}r{3pt}){9-12}
\endfirsthead
\caption[]{Type {\rm I} error for $k\in\{1,2,3,4,5\}$ uncorrelated variables with sample size $N\in\{ \num{5000}, \num{10000} \}$ and threshold $\tau\in\{0.4,0.5,0.6,0.7\}$ \textit{(continued)}}\\
\toprule
\multicolumn{6}{c}{$N=\num{5000}$}&\multicolumn{6}{c}{$N=\num{10000}$}   \\
\cmidrule(l{3pt}r{3pt}){1-6} \cmidrule(l{3pt}r{3pt}){7-12}
 $M$  & $k$ & $\tau=0.4$ & $\tau=0.5$ & $\tau=0.6$ & $\tau=0.7$ & $M$  & $k$  & $\tau=0.4$ & $\tau=0.5$ & $\tau=0.6$ & $\tau=0.7$\\
\cmidrule(l{3pt}r{3pt}){1-2} \cmidrule(l{3pt}r{3pt}){3-6} \cmidrule(l{3pt}r{3pt}){7-8} \cmidrule(l{3pt}r{3pt}){9-12}
\endhead
\
\endfoot
\bottomrule
\insertTableNotes
\endlastfoot
        20   &   1 &  0.0434 &  0.0079 &  0.0014 &  0.0000 & 
        20   &   1 &  0.0245 &  0.0041 &  0.0000 &  0.0000 \\      
        40   &   1 &  0.0419 &  0.0053 &  0.0003 &  0.0000 & 
        40   &   1 &  0.0196 &  0.0014 &  0.0000 &  0.0000 \\        
        60   &   1 &  0.0489 &  0.0058 &  0.0005 &  0.0000 & 
        60   &   1 &  0.0174 &  0.0023 &  0.0001 &  0.0001 \\        
        80   &   1 &  0.0576 &  0.0065 &  0.0004 &  0.0001 & 
        80   &   1 &  0.0159 &  0.0010 &  0.0001 &  0.0000 \\        
       100  &   1 &  0.0688 &  0.0080 &  0.0004 &  0.0000 & 
       100  &   1 &  0.0192 &  0.0012 &  0.0000 &  0.0000 \\       
       120  &   1 &  0.0822 &  0.0090 &  0.0002 &  0.0001 & 
       120  &   1 &  0.0205 &  0.0012 &  0.0001 &  0.0000 \\       
       140  &   1 &  0.0911 &  0.0119 &  0.0006 &  0.0002 & 
       140  &   1 &  0.0226 &  0.0008 &  0.0000 &  0.0000 \\      
       160  &   1 &  0.1070 &  0.0131 &  0.0013 &  0.0001 & 
       160  &   1 &  0.0250 &  0.0010 &  0.0001 &  0.0000 \\       
       180  &   1 &  0.1226 &  0.0140 &  0.0016 &  0.0000 & 
       180  &   1 &  0.0281 &  0.0019 &  0.0000 &  0.0000 \\       
       200  &   1 &  0.1462 &  0.0176 &  0.0011 &  0.0000 & 
       200  &   1 &  0.0285 &  0.0025 &  0.0002 &  0.0000 \\    
                          \addlinespace   
        20   &   2 &  0.1055 &  0.0364 &  0.0103 &  0.0011 & 
        20   &   2 &  0.0681 &  0.0232 &  0.0065 &  0.0009 \\       
        40   &   2 &  0.1293 &  0.0297 &  0.0054 &  0.0013 & 
        40   &   2 &  0.0692 &  0.0151 &  0.0015 &  0.0000 \\        
        60   &   2 &  0.1411 &  0.0291 &  0.0042 &  0.0005 & 
        60   &   2 &  0.0660 &  0.0105 &  0.0007 &  0.0000 \\        
        80   &   2 &  0.1588 &  0.0299 &  0.0043 &  0.0005 & 
        80   &   2 &  0.0611 &  0.0087 &  0.0014 &  0.0002 \\        
       100  &   2 &  0.1743 &  0.0318 &  0.0056 &  0.0002 & 
       100  &   2 &  0.0636 &  0.0069 &  0.0007 &  0.0000 \\      
       120  &   2 &  0.2014 &  0.0337 &  0.0039 &  0.0003 & 
       120  &   2 &  0.0651 &  0.0086 &  0.0005 &  0.0002 \\       
       140  &   2 &  0.2281 &  0.0385 &  0.0047 &  0.0007 & 
       140  &   2 &  0.0721 &  0.0080 &  0.0005 &  0.0000 \\       
       160  &   2 &  0.2434 &  0.0438 &  0.0042 &  0.0003 & 
       160  &   2 &  0.0754 &  0.0088 &  0.0004 &  0.0000 \\       
       180  &   2 &  0.2650 &  0.0507 &  0.0059 &  0.0005 & 
       180  &   2 &  0.0802 &  0.0069 &  0.0006 &  0.0000 \\       
       200  &   2 &  0.2950 &  0.0507 &  0.0077 &  0.0003 & 
       200  &   2 &  0.0861 &  0.0095 &  0.0004 &  0.0001 \\     
                          \addlinespace  
        20   &   3 &  0.1470 &  0.0546 &  0.0159 &  0.0037 & 
        20   &   3 &  0.0970 &  0.0397 &  0.0093 &  0.0022 \\        
        40   &   3 &  0.1978 &  0.0511 &  0.0107 &  0.0017 & 
        40   &   3 &  0.1085 &  0.0270 &  0.0038 &  0.0006 \\        
        60   &   3 &  0.2258 &  0.0508 &  0.0093 &  0.0010 & 
        60   &   3 &  0.1011 &  0.0178 &  0.0025 &  0.0006 \\        
        80   &   3 &  0.2513 &  0.0513 &  0.0078 &  0.0008 & 
        80   &   3 &  0.1071 &  0.0156 &  0.0026 &  0.0000 \\        
       100  &   3 &  0.2749 &  0.0532 &  0.0081 &  0.0007 & 
       100  &   3 &  0.1079 &  0.0157 &  0.0020 &  0.0003 \\       
       120  &   3 &  0.3076 &  0.0587 &  0.0079 &  0.0013 & 
       120  &   3 &  0.1130 &  0.0131 &  0.0013 &  0.0001 \\       
       140  &   3 &  0.3303 &  0.0626 &  0.0080 &  0.0008 & 
       140  &   3 &  0.1136 &  0.0139 &  0.0010 &  0.0003 \\       
       160  &   3 &  0.3722 &  0.0687 &  0.0093 &  0.0007 & 
       160  &   3 &  0.1230 &  0.0156 &  0.0017 &  0.0002 \\              
       180  &   3 &  0.3964 &  0.0771 &  0.0089 &  0.0008 & 
       180  &   3 &  0.1348 &  0.0155 &  0.0015 &  0.0001 \\       
       200  &   3 &  0.4315 &  0.0845 &  0.0094 &  0.0010 & 
       200  &   3 &  0.1415 &  0.0136 &  0.0011 &  0.0001 \\   
                          \addlinespace    
        20   &   4 &  0.1981 &  0.0807 &  0.0265 &  0.0074 & 
        20   &   4 &  0.1251 &  0.0501 &  0.0169 &  0.0037 \\        
        40   &   4 &  0.2566 &  0.0745 &  0.0153 &  0.0032 & 
        40   &   4 &  0.1455 &  0.0361 &  0.0056 &  0.0015 \\               
        60   &   4 &  0.2957 &  0.0678 &  0.0117 &  0.0029 & 
        60   &   4 &  0.1472 &  0.0299 &  0.0051 &  0.0006 \\        
        80   &   4 &  0.3213 &  0.0777 &  0.0118 &  0.0022 & 
        80   &   4 &  0.1464 &  0.0254 &  0.0028 &  0.0004 \\        
       100  &   4 &  0.3680 &  0.0777 &  0.0109 &  0.0027 & 
       100  &   4 &  0.1543 &  0.0228 &  0.0027 &  0.0002 \\      
       120  &   4 &  0.3953 &  0.0824 &  0.0107 &  0.0014 & 
       120  &   4 &  0.1552 &  0.0224 &  0.0019 &  0.0002 \\       
       140  &   4 &  0.4302 &  0.0924 &  0.0113 &  0.0018 & 
       140  &   4 &  0.1672 &  0.0216 &  0.0026 &  0.0008 \\       
       160  &   4 &  0.4702 &  0.0949 &  0.0112 &  0.0002 & 
       160  &   4 &  0.1736 &  0.0201 &  0.0018 &  0.0000 \\       
       180  &   4 &  0.4975 &  0.1044 &  0.0133 &  0.0018 & 
       180  &   4 &  0.1789 &  0.0218 &  0.0025 &  0.0000 \\       
       200  &   4 &  0.5409 &  0.1168 &  0.0157 &  0.0012 & 
       200  &   4 &  0.1949 &  0.0186 &  0.0017 &  0.0003 \\     
                          \addlinespace  
        20   &   5 &  0.2170 &  0.0977 &  0.0373 &  0.0103 & 
        20   &   5 &  0.1451 &  0.0667 &  0.0214 &  0.0055 \\        
        40   &   5 &  0.3107 &  0.0961 &  0.0237 &  0.0050 & 
        40   &   5 &  0.1784 &  0.0547 &  0.0109 &  0.0019 \\        
        60   &   5 &  0.3588 &  0.0956 &  0.0162 &  0.0033 & 
        60   &   5 &  0.1884 &  0.0375 &  0.0077 &  0.0007 \\        
        80   &   5 &  0.4057 &  0.1005 &  0.0163 &  0.0023 & 
        80   &   5 &  0.1881 &  0.0347 &  0.0033 &  0.0005 \\        
       100  &   5 &  0.4401 &  0.1035 &  0.0145 &  0.0019 & 
       100  &   5 &  0.1902 &  0.0298 &  0.0036 &  0.0001 \\       
       120  &   5 &  0.4812 &  0.1071 &  0.0161 &  0.0018 & 
       120  &   5 &  0.2008 &  0.0267 &  0.0033 &  0.0002 \\       
       140  &   5 &  0.5248 &  0.1117 &  0.0153 &  0.0017 & 
       140  &   5 &  0.2016 &  0.0243 &  0.0031 &  0.0004 \\       
       160  &   5 &  0.5628 &  0.1234 &  0.0142 &  0.0014 & 
       160  &   5 &  0.2142 &  0.0263 &  0.0028 &  0.0001 \\       
       180  &   5 &  0.5975 &  0.1328 &  0.0170 &  0.0013 & 
       180  &   5 &  0.2240 &  0.0254 &  0.0031 &  0.0001 \\      
       200  &   5 &  0.6382 &  0.1477 &  0.0188 &  0.0017 & 
       200  &   5 &  0.2425 &  0.0276 &  0.0016 &  0.0004\\*

\end{longtable}
\end{ThreePartTable}

\begin{ThreePartTable}
\begin{TableNotes}
\footnotesize
\item \textit{Notes:} the type {\rm I} error is computed as the share of iterations where the block containing $\kappa$ variables has not been discarded.
\end{TableNotes}
\begin{longtable}{rllllllllrllllllll}
\caption{\label{tab:alphablock}Type {\rm I} error for a single uncorrelated $\kappa\times\kappa$ block with $\kappa\in\{2,3,4,5,6\}$, sample size $N\in\{ \num{5000}, \num{10000} \}$ and threshold $\tau\in\{0.6,0.7,0.8,0.9\}$}\\

\toprule
\multicolumn{6}{c}{$N=\num{5000}$}&\multicolumn{6}{c}{$N=\num{10000}$}   \\
\cmidrule(l{3pt}r{3pt}){1-6} \cmidrule(l{3pt}r{3pt}){7-12}
 $M$  & $\kappa$ & $\tau=0.6$ & $\tau=0.7$ & $\tau=0.8$ & $\tau=0.9$ & $M$  & $\kappa$  & $\tau=0.6$ & $\tau=0.7$ & $\tau=0.8$ & $\tau=0.9$\\
\cmidrule(l{3pt}r{3pt}){1-2} \cmidrule(l{3pt}r{3pt}){3-6} \cmidrule(l{3pt}r{3pt}){7-8} \cmidrule(l{3pt}r{3pt}){9-12}
\endfirsthead
\caption[]{Type {\rm I} error for a single uncorrelated $\kappa\times\kappa$ block with $\kappa\in\{2,3,4,5,6\}$, sample size $N\in\{ \num{5000}, \num{10000} \}$ and threshold $\tau\in\{0.6,0.7,0.8,0.9\}$ \textit{(continued)}}\\
\toprule
\multicolumn{6}{c}{$N=\num{5000}$}&\multicolumn{6}{c}{$N=\num{10000}$}   \\
\cmidrule(l{3pt}r{3pt}){1-6} \cmidrule(l{3pt}r{3pt}){7-12}
 $M$  & $\kappa$ & $\tau=0.6$ & $\tau=0.7$ & $\tau=0.8$ & $\tau=0.9$ & $M$  & $\kappa$  & $\tau=0.6$ & $\tau=0.7$ & $\tau=0.8$ & $\tau=0.9$\\
\cmidrule(l{3pt}r{3pt}){1-2} \cmidrule(l{3pt}r{3pt}){3-6} \cmidrule(l{3pt}r{3pt}){7-8} \cmidrule(l{3pt}r{3pt}){9-12}
\endhead
\
\endfoot
\bottomrule
\insertTableNotes
\endlastfoot
       20  &     2  &  0.0684  &  0.0256  &  0.0084  &  0.0011  &
       20  &     2  &  0.0429  &  0.0178  &  0.0041  &  0.0008 \\       
       40  &     2  &  0.0573  &  0.0165  &  0.0031  &  0.0012  &
       40  &     2  &  0.0249  &  0.0064  &  0.0011  &  0.0004 \\      
       60  &     2  &  0.0540  &  0.0143  &  0.0040  &  0.0009  &
       60  &     2  &  0.0174  &  0.0036  &  0.0003  &  0.0001 \\    
       80  &     2  &  0.0629  &  0.0143  &  0.0028  &  0.0003  &
       80  &     2  &  0.0180  &  0.0021  &  0.0001  &  0.0001 \\      
      100 &     2  &  0.0647  &  0.0149  &  0.0030  &  0.0006  &
      100 &     2  &  0.0144  &  0.0035  &  0.0004  &  0.0001 \\    
      120 &     2  &  0.0746  &  0.0176  &  0.0024  &  0.0003  &
      120 &     2  &  0.0175  &  0.0030  &  0.0004  &  0.0000 \\   
      140 &     2  &  0.0790  &  0.0188  &  0.0036  &  0.0003  &
      140 &     2  &  0.0176  &  0.0015  &  0.0001  &  0.0000 \\
      160 &     2  &  0.0903  &  0.0219  &  0.0034  &  0.0005  &
      160 &     2  &  0.0162  &  0.0025  &  0.0003  &  0.0001 \\
      180 &     2  &  0.0931  &  0.0238  &  0.0037  &  0.0006  &
      180 &     2  &  0.0186  &  0.0027  &  0.0000  &  0.0000 \\
      200 &     2  &  0.1038  &  0.0239  &  0.0035  &  0.0009  &
      200 &     2  &  0.0224  &  0.0032  &  0.0003  &  0.0000 \\
                          \addlinespace  
       20  &     3  &  0.0847  &  0.0325  &  0.0091  &  0.0016  &
       20  &     3  &  0.0556  &  0.0209  &  0.0049  &  0.0011 \\
       40  &     3  &  0.0731  &  0.0213  &  0.0048  &  0.0013  &
       40  &     3  &  0.0341  &  0.0063  &  0.0018  &  0.0004 \\
       60  &     3  &  0.0821  &  0.0254  &  0.0039  &  0.0009  &
       60  &     3  &  0.0275  &  0.0049  &  0.0009  &  0.0003 \\
       80  &     3  &  0.0831  &  0.0215  &  0.0049  &  0.0007  &
       80  &     3  &  0.0249  &  0.0053  &  0.0010  &  0.0000 \\
      100 &     3  &  0.0894  &  0.0213  &  0.0052  &  0.0006  &
      100 &     3  &  0.0250  &  0.0055  &  0.0010  &  0.0003 \\
      120 &     3  &  0.0937  &  0.0217  &  0.0052  &  0.0016  &
      120 &     3  &  0.0254  &  0.0048  &  0.0006  &  0.0001 \\
      140 &     3  &  0.1037  &  0.0236  &  0.0040  &  0.0008  &
      140 &     3  &  0.0270  &  0.0047  &  0.0007  &  0.0001 \\
      160 &     3  &  0.1078  &  0.0254  &  0.0061  &  0.0008  &
      160 &     3  &  0.0263  &  0.0054  &  0.0006  &  0.0001 \\
      180 &     3  &  0.1177  &  0.0264  &  0.0040  &  0.0010  &
      180 &     3  &  0.0259  &  0.0039  &  0.0011  &  0.0002 \\
      200 &     3  &  0.1262  &  0.0283  &  0.0050  &  0.0008  &
      200 &     3  &  0.0251  &  0.0052  &  0.0007  &  0.0001 \\
                                \addlinespace  
       20  &     4  &  0.1409  &  0.0658  &  0.0225  &  0.0073  &
       20  &     4  &  0.0895  &  0.0388  &  0.0137  &  0.0037 \\
       40  &     4  &  0.1337  &  0.0449  &  0.0146  &  0.0030  &
       40  &     4  &  0.0704  &  0.0223  &  0.0045  &  0.0005 \\
       60  &     4  &  0.1329  &  0.0442  &  0.0125  &  0.0036  &
       60  &     4  &  0.0567  &  0.0152  &  0.0036  &  0.0003 \\
       80  &     4  &  0.1459  &  0.0397  &  0.0122  &  0.0018  &
       80  &     4  &  0.0547  &  0.0125  &  0.0026  &  0.0006 \\
      100 &     4  &  0.1595  &  0.0461  &  0.0105  &  0.0023  &
      100 &     4  &  0.0514  &  0.0116  &  0.0017  &  0.0003 \\
      120 &     4  &  0.1622  &  0.0496  &  0.0114  &  0.0023  &
      120 &     4  &  0.0503  &  0.0111  &  0.0013  &  0.0002 \\
      140 &     4  &  0.1780  &  0.0491  &  0.0122  &  0.0030  &
      140 &     4  &  0.0504  &  0.0113  &  0.0024  &  0.0003 \\
      160 &     4  &  0.2021  &  0.0493  &  0.0106  &  0.0020  &
      160 &     4  &  0.0582  &  0.0099  &  0.0013  &  0.0002 \\
      180 &     4  &  0.2069  &  0.0490  &  0.0114  &  0.0017  &
      180 &     4  &  0.0570  &  0.0087  &  0.0014  &  0.0004 \\
      200 &     4  &  0.2065  &  0.0548  &  0.0119  &  0.0021  &
      200 &     4  &  0.0592  &  0.0100  &  0.0013  &  0.0003 \\
                                \addlinespace 
       20  &     5  &  0.1884  &  0.0990  &  0.0443  &  0.0146  &
       20  &     5  &  0.1255  &  0.0698  &  0.0277  &  0.0074 \\
       40  &     5  &  0.1991  &  0.0778  &  0.0244  &  0.0060  &
       40  &     5  &  0.1043  &  0.0342  &  0.0112  &  0.0025 \\
       60  &     5  &  0.1982  &  0.0766  &  0.0191  &  0.0054  &
       60  &     5  &  0.0892  &  0.0280  &  0.0075  &  0.0011 \\
       80  &     5  &  0.2221  &  0.0722  &  0.0199  &  0.0051  &
       80  &     5  &  0.0854  &  0.0258  &  0.0039  &  0.0008 \\
      100 &     5  &  0.2354  &  0.0740  &  0.0193  &  0.0040  &
      100 &     5  &  0.0862  &  0.0215  &  0.0040  &  0.0010 \\
      120 &     5  &  0.2580  &  0.0783  &  0.0233  &  0.0036  &
      120 &     5  &  0.0865  &  0.0207  &  0.0041  &  0.0007 \\
      140 &     5  &  0.2726  &  0.0792  &  0.0209  &  0.0033  &
      140 &     5  &  0.0895  &  0.0187  &  0.0039  &  0.0008 \\
      160 &     5  &  0.2858  &  0.0895  &  0.0180  &  0.0038  &
      160 &     5  &  0.0855  &  0.0188  &  0.0029  &  0.0004 \\
      180 &     5  &  0.2914  &  0.0867  &  0.0222  &  0.0040  &
      180 &     5  &  0.0928  &  0.0201  &  0.0033  &  0.0004 \\
      200 &     5  &  0.3129  &  0.0907  &  0.0192  &  0.0038  &
      200 &     5  &  0.0932  &  0.0191  &  0.0036  &  0.0005 \\
                                \addlinespace 
       20  &     6  &  0.2440  &  0.1413  &  0.0703  &  0.0245  &
       20  &     6  &  0.1708  &  0.0938  &  0.0458  &  0.0178 \\
       40  &     6  &  0.2711  &  0.1194  &  0.0399  &  0.0136  &
       40  &     6  &  0.1564  &  0.0598  &  0.0180  &  0.0056 \\
       60  &     6  &  0.2822  &  0.1073  &  0.0315  &  0.0104  &
       60  &     6  &  0.1399  &  0.0429  &  0.0104  &  0.0031 \\
       80  &     6  &  0.3018  &  0.1082  &  0.0350  &  0.0080  &
       80  &     6  &  0.1334  &  0.0336  &  0.0094  &  0.0014 \\
      100 &     6  &  0.3288  &  0.1137  &  0.0328  &  0.0077  &
      100 &     6  &  0.1340  &  0.0352  &  0.0075  &  0.0012 \\
      120 &     6  &  0.3472  &  0.1197  &  0.0316  &  0.0081  &
      120 &     6  &  0.1363  &  0.0318  &  0.0074  &  0.0008 \\
      140 &     6  &  0.3723  &  0.1210  &  0.0310  &  0.0092  &
      140 &     6  &  0.1388  &  0.0296  &  0.0067  &  0.0007 \\
      160 &     6  &  0.3861  &  0.1281  &  0.0326  &  0.0071  &
      160 &     6  &  0.1337  &  0.0312  &  0.0070  &  0.0009 \\
      180 &     6  &  0.3958  &  0.1404  &  0.0353  &  0.0065  &
      180 &     6  &  0.1399  &  0.0314  &  0.0050  &  0.0004 \\
      200 &     6  &  0.4152  &  0.1389  &  0.0340  &  0.0070  &
      200 &     6  &  0.1405  &  0.0305  &  0.0049  &  0.0008 \\*
\end{longtable}
\end{ThreePartTable}

\end{document}